\newcommand{\bs}[1]{\boldsymbol{#1}}
\newcommand{\Rr}[1]{\mathfrak{Re} \left \{#1 \right\}}
\newcommand{\Ii}[1]{\mathfrak{Im} \left \{#1 \right\}}
\newcommand{\m}[1]{\mathcal{#1}}
\newcommand{\SNR}{\text{SNR}}
\newcommand{\bb}[1]{\bar{\textbf{#1}}}
\newcommand{\eor}{\hfill $\bigtriangledown$ \\}
\newcommand{\eot}{\hfill $\blacksquare$ \\}
\newcommand{\g}{\text{\textscg}}
\newtheorem{theorem}{Theorem}[section]
\newtheorem{lemma}[theorem]{Lemma}
\newtheorem{remark}[theorem]{Remark}
\newenvironment{proof}[1][Proof]{\begin{trivlist}
\item[\hskip \labelsep {\bfseries #1}]}{\end{trivlist}}
\newcommand{\qed}{\nobreak \ifvmode \relax \else
      \ifdim\lastskip<1.5em \hskip-\lastskip
      \hskip1.5em plus0em minus0.5em \fi \nobreak
      \vrule height0.75em width0.5em depth0.25em\fi}
\newenvironment{myindentpar}[1]%
	{\begin{list}{}%
	{\setlength{\leftmargin}{#1}}% 
	\item[]%
    			}
	{\end{list}}
\title{\LARGE \textbf{A Bayesian approach to sparse channel estimation in OFDM systems}}
\author[$^\dagger$]{\Large Rodrigo Carvajal}
\author[$^\ddagger$]{\Large Boris I. Godoy}
\author[$^\ddagger$]{\Large Juan C. Ag\"uero}
\affil[$^\dagger$]{\normalsize Electronics Engineering Department, Universidad T\'{e}cnica Federico Santa Mar\'ia, Chile. \Letter~{rodrigo.carvajalg@usm.cl}}
\affil[$^\ddagger$]{\normalsize School of Electrical Engineering and Computer Science, The University of Newcastle, Australia. \Letter~\{boris.godoy,~juan.aguero\}@newcastle.edu.au}
\date{}
\begin{document}
\maketitle

\hrule \hrule
\begin{abstract}
\label{abstract}
\noindent In this work, we address the problem of estimating sparse
communication channels in OFDM systems in the presence of carrier frequency offset (CFO) and unknown noise variance. To this end, we consider a convex optimization problem, including a probability function,
accounting for the sparse nature of the communication channel. We use
the Expectation-Maximization (EM) algorithm to solve the corresponding
Maximum A Posteriori (MAP) estimation problem. We show that, by
concentrating the cost function in one variable, namely the CFO, the
channel estimate can be obtained in closed form within the EM
framework in the maximization step. We present an example where we
estimate the communication channel, the CFO, the symbol, the noise
variance, and the parameter defining the prior distribution of the
estimates. We compare the bit error rate performance of our
proposed MAP approach against Maximum Likelihood.

\end{abstract}
\vspace{2mm}
\hrule \hrule

\section{Introduction}
\label{sec:intro}

Sparse channel estimation is an important topic found in many
different applications (see e.g \cite{ref:Cotter2002,
  ref:Stojanovic2009, ref:Berger, ref:Taubock, ref:Kim} and the references therein). In fact, in many
real-world channels of practical interest, e.g. underwater acoustic
channels \cite{ref:Kilfoyle}, digital television channels
\cite{ref:ATSC}, and residential ultrawideband channels
\cite{ref:Molisch}, the associated impulse response tends to be
sparse. To obtain an accurate channel impulse response is crucial
since it is used in the decoding stage. Sparsity helps one can obtain
better channel estimates. In addition, the
most common technique for promoting sparsity is by an $\ell_1-$norm
regularization, commonly termed as Lasso \cite{ref:Tibshirani1996}. However, sparsity can be promoted in different ways. For example, in \cite{ref:Larsson07}, sparsity is promoted by generating a pool of possible models, and then performing model selection. 

A special characteristic of OFDM systems is its sensitivity to
frequency synchronizations errors (see e.g. \cite{ref:Carvajal2013}),
which is produced (among other causes) by carrier frequency offset
(CFO). This adds an extra difficulty to the channel estimation
problem, since the CFO must be estimated as well as other channel parameters. To estimate the
CFO, Maximum Likelihood (ML) estimation has been successfully utilized (see
e.g. \cite{ref:Moose, ref:Mo, ref:Carvajal2013}). 

In this work, we combine the following problems: (i) estimation of a
sparse channel impulse response (CIR) in OFDM systems, (ii) estimation
of CFO, (iii) estimation of the noise variance, (iv) estimation of the transmitted symbol, and (v) estimation of the
(hyper) parameter defining the prior probability density function
(pdf) of the sparse channel. The estimation problem is
solved by utilizing a generalization of the EM algorithm (see
e.g. \cite{ref:Mo, ref:Carvajal2013, ref:Carvajal12, ref:Godoy13} and the references
therein) for MAP estimation, based on the $\ell_1-$norm of the CIR. In particular, the same methodology has been applied in \cite{ref:Godoy13} for the identificaton of a sparse finite impulse response filter with quantized data. Our work generalizes previous work on
joint CFO and CIR estimation, see \cite{ref:Mo} and the generalization \cite{ref:Carvajal12}. 

The problem of estimating a sparse channel and the transmitted symbol has been previously addresses in the literature \cite{ref:Schniter}. In \cite{ref:Schniter}, it is also considered bit interleaved coded modulation (BICM) in OFDM systems. The approach in \cite{ref:Schniter} corresponds to the utilization of the \textit{generalized approximate message passing} (GAMP) algorithm \cite{ref:Rangan2}, which allows for solving the BICM problem. GAMP corresponds to a generalization of the \textit{approximate message passing} (AMP) algorithm \cite{ref:Donoho2009}, although it does not allow for unknown parameters other than the channel. The AMP and GAMP algorithms are based on belief propagation \cite{ref:Rangan2, ref:Donoho2009}. When the system is linear (with respect to the channel response), GAMP and AMP are the same algorithm \cite{ref:Rangan2}. In addition, for sparsity problems, the AMP algorithm corresponds to an efficient implementation of the Lasso estimator, see\cite{ref:Rangan2} and the references therein. Hence, under the same setup, the MAP-EM algorithm we propose and the GAMP algorithm utilized in \cite{ref:Schniter} yield the same results.

% small scale fading channel models vary with respect to the
% location and environment (see e.g. \cite{ref:Rappaport}). From those
% models, two common statistical descriptions of the channel are
% Rayleigh and Rice fading. Both models regard the channel coefficients
% (real and imaginary parts) as Gaussian random variables (see
% e.g. \cite{ref:Simon}). We show that, by estimating the parameter
% defining the prior distribution of the estimates, a Rayleigh or Rice
% distribution (for the channel) can be incorporated as \textit{prior}
% (in addition to the $\ell_1-$norm), for which an MAP estimation
% problem is solved.

\section{OFDM System Model}
\label{section:ofdm_sysid}

We consider the following OFDM system model (see e.g. \cite{ref:Carvajal2013, ref:Carvajal12} and the references therein), depiected in Fig. \ref{fig:signal} \cite{ref:Carvajal2013}:
\begin{myindentpar}{0cm}
  \textbullet \hspace{1mm}The channel is modelled as a finite impulse   response (FIR) filter $\textbf{h} = [h_0 \,\, h_1 \,\, \ldots \,\, h_{L-1}]^T \in \mathbb{C}^{L}$ with $L$ taps.

  \textbullet \hspace{1mm}CFO is modelled by $\textbf{C}_{\varepsilon}
  = \exp \{j\mathrm{diag} \left( \frac{2\pi\varepsilon k}{ N_C}\right
  )\}$, with $k = 0,1,\ldots, N_C-1$; $\varepsilon$ is the
  \textit{normalized} frequency offset ($\vert \varepsilon \vert \leq
  1/2$, $N_C$ is the number of subcarriers).

  \textbullet \hspace{1mm} The cyclic prefix (CP) is removed at the
  receiver. Thus, the received signal is given by\vspace{-2mm}
\begin{equation}\label{eq:ofdm_model}
\textbf{r} = \textbf{C}_{\varepsilon}\tilde{\textbf{H}} {\sf P} \textbf{x} + \bs{\eta},
\vspace{-1mm}
\end{equation}
where the channel matrix $\tilde{\textbf{H}}$ is an $(N_C \times N_C)$
circulant matrix whose first column is given by $ [h_0, \,\, h_1, \,\,
\cdots \,\, $ $ h_{L-1},\,\,0 \,\, \dots \,\, 0]^T $, \textbf{x} is the
transmitted signal (after the inverse discrete Fourier transform), $\sf P$ is a permutation matrix that shuffles the transmitted symbol samples in any desired fashion \cite{ref:Carvajal2013}, $\bs{\eta}\sim {\mathcal N}(0,\sigma^2 {\bf I}_{N_C} ) $, and ${\bf I}_{N_C}$ is the identity matrix of dimension $N_C$. Notice that the
time-domain representation of the multicarrier signals in
\eqref{eq:ofdm_model} resembles a single-carrier system. However, the
main difference corresponds to the utilization of the cyclic prefix,
which yields a circulant channel matrix at the receiver after the
cyclic prefix removal.

\end{myindentpar}

\begin{figure}[t]
	\centering
	\scalebox{0.8}{\includegraphics[scale = 0.27, angle = 0]{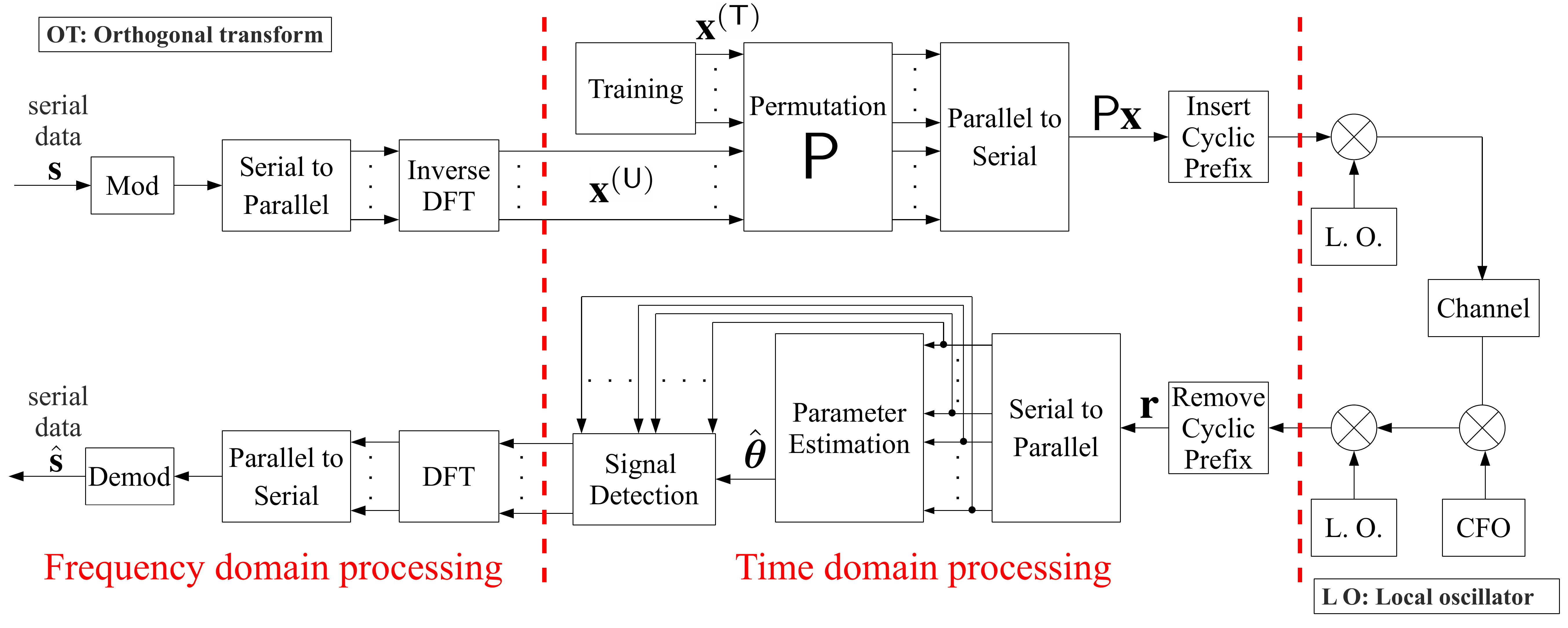}}
	\centering
	\vspace{-2mm}
	\caption{Block diagram of the general multicarrier system considered in this paper. LO: Local oscillator.}
	\vspace{-2mm}
	\label{fig:signal}
\end{figure}

The transmitted signal is assumed to have a deterministic part
(comprising known training data) and a stochastic part (comprising the
unknown data). Thus, the transmitted signal corresponds, after the
application of the IDFT, to the time domain multiplexing of a training
sequence and data coming from the data terminal equipment. We also
need to express the transmitted signal in terms of the known
(training) component, $\textbf{x}^{(\mathsf{T})}$, and the unknown
component, $\textbf{x}^{(\mathsf{U})}$. Thus, the real representation
of the transmitted signal $\textbf{x}$ is given by \vspace{-1mm}
\begin{equation}
\bar{\textbf{x}} =  [{\textbf{x}_{\text{\textscr}}^{\mathsf{(T)}}} ^T \; {\textbf{x}_{\text{\textscr}}^{\mathsf{(U)}}} ^T \;{\textbf{x}_{\text{\textsci}}^{\mathsf{(T)}}} ^T \;{\textbf{x}_{\text{\textsci}}^{\mathsf{(U)}}} ^T  ]^T \, \in \mathbb{R}^{2N_C},
\vspace{-1mm}
\end{equation}
where $(\cdot)_{\text{\textscr}}$, $(\cdot)_{\text{\textsci}}$,
$(\cdot)^{\mathsf{(T)}}$ and $(\cdot)^{\mathsf{(U)}}$ represent the
real part, imaginary part, training part, and unknown part,
respectively.

For estimation purposes, it is possible to express the  model in \eqref{eq:ofdm_model} as a real-valued state-space model with sample index $k$:
\begin{equation}
\bar{\textbf{y}}_k  = \left[	
\begin{array}{cc}	
\textbf{a}_k  &  -\textbf{b}_k\\
\textbf{b}_k  &  \textbf{a}_k
\end{array}	\right] \bb{x}+ \bar{\bs{\eta}}_k  = \bar{\textbf{M}}_k  \bb{x}+ \bar{\bs{\eta}}_k ,
\label{eq:sss}
\end{equation}
where $\bar{\textbf{y}}_k = [\Rr{r_k} \, \Ii{r_k}]^T$, $\bar{\bs{\eta}}_k = [\Rr{\eta_k} \, \Ii{\eta_k}]^T$,  $k = 0,1,...,N_C-1$
is the time sample index of the OFDM symbol, $\Rr{\cdot}$ and
$\Ii{\cdot}$ denote the real and imaginary parts, respectively,
\begin{align}
\textbf{a}_k = &  (\cos \psi_k ) \textbf{q}_{k+1}^T \Rr{\tilde{\textbf{H}}}
\sf{P} - (\sin \psi_k) \textbf{q}_{k+1}^T \Ii{\tilde{\textbf{H}}} \sf{P},
\\
\textbf{b}_k = & (\sin \psi_k) \textbf{q}_{k+1}^T \Rr{\tilde{\textbf{H}}}
\sf{P} + (\cos \psi_k) %\textbf{q}_{k}^T \Ii{\tilde{\textbf{H}}}
\textbf{e}_{k+1}^T \Ii{\tilde{\textbf{H}}} \sf{P},\\
\psi_k = & \frac{2\pi k \varepsilon}{N_C},
\end{align}
and $\textbf{q}_k$ is the $k$th column of the identity matrix. This
state-space representation is equivalent to \eqref{eq:ofdm_model}, but
it is more convenient for the identification approach used in this
work. In addition, and as it will be shown in Section \ref{section:Q_ML}, the estimation procedure is based upon expressions in the form of $E[\bb{x}^{(\sf U)}|\bb{y}]$ and $E[\bb{x}^{(\sf U)}{\textbf{x}^{\mathsf{(U)}}} ^T|\bb{y}]$, amongst other quantities. The attainment of these two expectations can be achieved, for instance, by applying Bayes' rule for the \textit{posterior} pdf
\begin{equation}
p(\textbf{x}^{\mathsf{(U)}}|\bb{y}) = \frac{p(\bb{y}|\textbf{x}^{\mathsf{(U)}}) p(\textbf{x}^{\mathsf{(U)}})}{p(\bb{y})},
\end{equation}
for any given \textit{prior} pdf $p(\textbf{x}^{\mathsf{(U)}})$.
\begin{remark}
It is possible to extend the state-space model in \eqref{eq:sss} by including a constant state vector that corresponds to the whole unknown transmitted signal (see e.g \cite[Chap. 9]{ref:Soderstrom}). That is,\textup{
\begin{align}
\bs{\chi}_{k+1} & =  \,\,\bs{\chi}_{k} = \bb{x}^{(\mathsf{U})}, \nonumber \\
\bar{\textbf{y}}_k & = \bar{\textbf{M}}_k  \bb{x}+ \bar{\bs{\eta}}_k. 
\label{eq:sss_ext} \vspace{-1mm}
\end{align} }
Notice that the subindex $k$ for $\bs{\chi}$ in \eqref{eq:sss_ext} indicates that $\bs{\chi}$ remains unchanged for every sample index $k = 0,1,...,N_C-1$. This extension allows for the utilization of \textit{filtering techniques} for the attainment of \textup{$p(\textbf{x}^{\mathsf{(U)}}|\bb{y})$} (and consequently \textup{$E[\bb{x}^{(\sf U)}|\bb{y}]$} and \textup{$E[\bb{x}^{(\sf U)}{\textbf{x}^{\mathsf{(U)}}} ^T|\bb{y}]$}).
\end{remark}

%When the transmitted signal is partially known, the stochastic part of $\bb{x}$, $\bb{x}^{(\sf U)}$, can be regarded as a constant linear state for each sample of the received signal. However, notice that there is only one transmitted symbol and it is considered as a whole vector for estimation purposes. In this case, a state-space representation of the model in \eqref{eq:ofdm_model} can be expressed as:%\vspace{-1mm}

\begin{remark} 
  We consider a general state-space model that can be utilized for
  proper and improper signals\footnote{This representation also
    extends to proper and improper CIR and additive noise.}
  \cite{ref:Miller}. In this sense, our approach can be applied to all
  common modulation schemes, such as binary phase shift keying (BPSK)
  and Gaussian minimum shift keying (GMSK), which are improper (see
  e.g. \cite{ref:Buzzi}). \eor
\vspace{-2mm}
\end{remark}

Regarding the received signal, the conditional pdf of $\textbf{y}=
[\bar{\textbf{y}}^T_{0},\dots ,\bar{\textbf{y}}^T_{N_C-1}]^T$ is given by $p(\textbf{y}\,\vert \, \bs{\chi}, {\bs{\theta}}) \sim \m{N}(\textbf{M}\bb{x},\boldsymbol{\Sigma}_{y})$
%\vspace{-2mm}
%\begin{equation}
%p(\textbf{y}\,\vert \, \bb{x}, {\bs{\theta}}) =  \frac{1}{(2\pi)
%  ^{N_c} {\left \vert \boldsymbol{\Sigma}_{y}\right \vert ^{1/2}}}e^{-0.5(\textbf{y} - \textbf{M}\bb{x})^T\boldsymbol{\Sigma}_{y}^{-1}(\textbf{y}- \textbf{M}\bb{x})},
%\label{eq:pdf_y}\vspace{-2mm}
%\end{equation}
where the vector of parameters $\bs{\theta} = (\bb{h} , \varepsilon,
\sigma )$, $\bb{h} = [\Rr{\textbf{h}} \, \Ii{\textbf{h}} ]^T$, and
\vspace{-2mm}
\begin{equation*}%\label{eq:M_matrix}
\textbf{M} = \begin{bmatrix}
\bar{\textbf{M}}_0^T & \cdots &
\bar{\textbf{M}}_{N_C-1}^T \end{bmatrix}^T, \quad \bs{\Sigma}_y =
\mathbb{E}[\hat{\bs{\eta}} \hat{\bs{\eta}}^T], \vspace{-2mm} 
\end{equation*}
with $\hat{\bs{\eta}} = [\bar{\bs{\eta}}_0^T \,
\cdots\,\bar{\bs{\eta}}_{N_C-1}^T]$. For proper additive white noise with
variance $\sigma^2$, $\boldsymbol{\Sigma}_{y} = 0.5\sigma^2 {\bf
  I}_{2N_C}$.
% \begin{remark}
% For the vectors $\bb{x}_k$, $\textbf{y}_k$, $\bar{\bs{\eta}}_k$, $\textbf{a}_k$, and $\textbf{b}_k$, and the scalar $\psi_k$, the subscript $k$ represents the sample index and not the $k$th element (of those vectors). \eor
% \vspace{-2mm}
% \end{remark}
For the unknown part of the transmitted signal, the corresponding pdf is simply expressed as $p(\bb{x}^{(\sf{U})})$. This allows the transmitted signal to be generated by any modulation scheme, yielding a family of possible pdf's.
\begin{remark} 
  In this work, we consider a time-domain processing for the
  attainment of the estimates. This approach yields a special
  structure that closely resembles single-carrier systems
  \cite{ref:Wang2004, ref:Amleh2010}. However, to be consistent with
  previous works, we present our method in the OFDM framework. Because
  of the similarity of OFDM systems in time domain and single-carrier
  systems, the present algorithm can also be modified to cover the
  latter case.
% In this letter, we consider a time-domain processing for the attainment of the estimates. This approach yields a special structure that resembles single-carrier systems, since both time-domain multi-carrier and single-carrier systems can be modelled by the same set of equations \cite{ref:Wang2004, ref:Amleh2010}. Thus, the method and algorithm presented in this letter can be utilized for the estimation of sparse communication channels in the presence of \textit{hidden variables} in single-carrier systems as well.
\eor
\vspace{-2mm}
\end{remark}

\section{MAP estimation in OFDM systems}

To promote sparsity in the parameter $\bb{h}$, we include an $\ell_1$
regularization term in the form of a prior distribution, $p(\bb{h})$,
which, in turn, leads to a MAP estimation problem. In general, MAP
estimation allows for the inclusion of one or more terms that account
for statistical prior knowledge of the parameters
$\bs{\theta}$. However, here we are interested in utilizing prior
knowledge of the channel impulse response only. 
%Hence, we propose
%utilizing $p(\bb{h}) \propto \exp\{- \frac{|\bb{h}|}{\tau} \}$, where
%$\tau$ is a factor controlling the strength of the regularization. 
On the other hand, in a MAP estimation problem, a good estimate
$\hat{\sigma}$ is crucial. Thus, it is important to take into account
$\sigma$ in the definition of the problem (see e.g.
\cite{ref:stadler10}). If this is not done, the regularized
optimization problem may be non-convex and exhibit numerical
difficulties. To address this issue, we can express the prior distribution
for $\bb{h}$ as
\begin{align}\label{eq:p_beta}
p(\bb{h}) &= p(\bb{h}|\sigma,\varepsilon ) p(\sigma)p(\varepsilon),
\quad \text{with}\\ 
%\end{equation}
%\begin{equation}
\label{eq:p_theta_sigma}
p(\bb{h} | \sigma) &= \left(\frac{1}{2\sigma\tau}\right)^{2L} \exp \left\lbrace- \frac{|
  \bb{h}|}{\tau \sigma}  \right\rbrace.
%\end{equation}
\end{align}
%Then, the associated MAP problem
%becomes:\vspace{-1mm}
%\begin{align}%\label{eq:q1}
%%\begin{split}
%\hat{\bs{\theta}} 
%%&= \mathrm{arg} \max_{\bs{\theta}}
%%\,\,p(\textbf{y}| \bs{\theta}) p(\bs{\theta})  
%= \mathrm{arg} \max_{\bs{\theta}} \,\, [\log\,p(\textbf{y} |
%\bs{\theta}) + \log\,p(\bb{h} )], \vspace{-2mm} 
%\label{eq:log_map}
%\end{align}
%where $p(\textbf{y}|\bs{\theta})$ is the \textit{likelihood} function.

Since we assume no prior knowledge for the channel noise variance nor
the CFO, we choose non-informative marginal prior distributions, for
example, $p(\sigma) = p(\epsilon) = 1$ (see
\cite{ref:stadler10}). Then, the maximization problem becomes
\begin{equation}\label{eq:log_map1}
  \hat{\bs{\theta}}     = \mathrm{arg} \max_{\bs{\theta}} \,\, [\log\,p( \textbf{y} | \bs{\theta})
  + \log\,p(\bb{h}|\sigma ) ].
%& = \ell (\theta) + g(\theta),
\end{equation}
where $p(\textbf{y}|\bs{\theta})$ is the \textit{likelihood} function.

To achieve convexity, we now can use the procedure suggested in
\cite{ref:stadler10}. That is, we introduce the following
reparametrization: $\textbf{\textscg} = \bb{h}/
\sigma$, $\rho = \sigma^{-1}$. We therefore define a new parameter to
be estimated:
$\bs{\gamma} = (\textbf{\textscg}, \varepsilon, \rho)$.
% \begin{remark}
%   In a regularized optimization problem such as \eqref{eq:log_map}, a
%   good estimate $\hat{\sigma}^2$ is crucial, thus, it is important to
%   take into account $\sigma^2$ in the definition of the problem (see
%   \cite{ref:stadler10}). If not, the regularized optimization problem
%   may be non-convex and exhibit numerical problems. To make it convex,
%   we can use the following change of variables $\text{\textbf{\textcrh
% }} =   \bb{h}/\sigma,\;\rho = \sigma^{-1}$  (see \cite{ref:stadler10}).
% %\begin{equation}\label{eq:re_par}
% %\textbf{\textscg} = \bb{h}/\sigma,\qquad  \rho = \sigma^{-1}.
% %\end{equation}
% We therefore define a new (vector) parameter to be estimated: 
% $\bs{\vartheta} = [\text{\textbf{\textcrh}}^T , \varepsilon, \rho]^T$.
% \end{remark}
Using the new parametrization, and looking at the second term of the
right hand side of \eqref{eq:log_map1}, this term can expressed as a
function of $\textbf{\textscg}$ (or equivalently of the ``individual''
terms of $\textbf{\g}$, $\g_j$ is the $j$th element of
$\textbf{\textscg}$, $j = 1,2,...,2L$, as (see e.g. \cite{ref:Polson})
\vspace{-2mm}
\begin{equation}\label{eq:log_pen_sum}
  \log\,p(\textbf{\textscg}) = \sum_{j =
    1}^{2L}z\left(\frac{\g_j}{\tau }\right),\vspace{-2mm}
\end{equation}
where $z(\cdot)$ is a function specifying the log-prior.
%For instance, in Ridge regression (assuming $j =
%1,...,\mathtt{p}$), the function $z$ corresponds to $z(\theta_j/\tau)
%= (\theta_j/\tau)^2$, and for Lasso, $z(\theta_j/\tau) = |\theta_j /
%\tau |$. In the current letter, we consider the Lasso regularization.

% Without loss of generality, and for the remainder of the paper, we
% focus on proper additive white Gaussian noise
% ($\bs{\eta}\sim\m{N}(\textbf{0}, \sigma^2\textbf{I}_{N_C})$). We
% consider that the prior pdf for $\varepsilon$ and $\sigma^2$ are
% non-informative, and, for simplicity, we consider the real
% representation of the CIR, \textbf{h}. That is, the parameter of
% interest becomes $\bb{h} = [\Rr{\textbf{h}}^T \,
% \Ii{\textbf{h}}^T]^T$.

\subsection{The EM algorithm and MAP estimation}

The EM algorithm is an iterative method that generates a succession of
estimates $\hat{\bs{\gamma}}^{(i)} = ( \hat{ \textbf{\textscg}}^{(i)},
\hat{\varepsilon}^{(i)}, {\hat{\rho} }^{(i)} )$, $i = 1,2, \ldots,$ of
the parameters $\bs{\gamma}$, which converges to a local maximum of
the log-likelihood function (see e.g. \cite{ref:Dempster}). The EM
algorithm consists of an iterative two-step procedure: i)
an expectation step (E-step), and ii) a maximization step (M-step). In
our case, we develop an augmented EM algorithm to solve the MAP
estimation problem presented in this work. The E-step consists of
computing the auxiliary function \vspace{-2mm}
\begin{equation}\label{eq:M_mod}
  \m{Q}(\bs{\gamma},\hat{\bs{\gamma}}^{(i)} ) = \m{Q}_\text{ML}(\bs{\gamma},\hat{\bs{\gamma}}^{(i)} ) + \m{Q}_{\text{prior}}(\textbf{\textscg},\hat{\textbf{\textscg}}^{(i)} ) ,\vspace{-2mm}
\end{equation}
where $\m{Q}_{\text{prior}}(\textbf{\textscg},\hat{\textbf{\textscg}}^{(i)}
)$ is the function corresponding to the a priori distribution.
The function  $\m{Q}_\text{ML}(\bs{\gamma},\hat{\bs{\gamma}}^{(i)})$ is the typical auxiliary function arising from the related ML estimation problem, given by \vspace{-2mm}
\begin{equation}\label{eq:Q}
  \m{Q}_\text{ML}(\bs{\gamma}, \bs{\gamma}^{(i)} ) := \mathbb{E}\left
    [\log p(\bb{x}^{(\sf U)},\textbf{y}\,\vert
    \,\bs{\gamma})]\,\vert \textbf{y},\hat{\bs{\gamma}}^{(i)}
  \right]. \vspace{-2mm}
\end{equation}

On the other hand, the M-step consists of maximizing the
auxiliary function
$\m{Q}(\bs{\gamma},\hat{\bs{\gamma}}^{(i)} )$, yielding 
%\vspace{-2mm}
\begin{equation}
  \hat{\bs{\gamma}}^{(i+1)} = \mathrm{arg} \max_{\boldsymbol{\gamma}}
  \mathcal{Q}(\bs{\gamma},\hat{\bs{\gamma}}^{(i)} ). 
\label{eq:M}
%\vspace{-5mm}
\end{equation}

% In our problem of interest, the transmitted signal is only partially
% known. This makes difficult to obtain (evaluate) the
% \textit{likelihood function}, which in turn is utilized in the
% computation of the MAP estimate. A way to circumvent this problem is
% by regarding the unknown part of the transmitted signal, $\bb{x}^{(\sf
%   U)}$, as a \textit{hidden} variable. Then, the
% Expectation-Maximization (EM) algorithm can be utilized to solve the
% MAP estimation problem.

\subsection{Evaluation of $\m{Q}_{\text{ML}}(\boldsymbol{\gamma},\hat{\boldsymbol{\gamma}}^{(i)})$ and its derivative}
\label{section:Q_ML}

The E-step of the EM algorithm given in \eqref{eq:Q} can be expressed
as \vspace{-3mm}
\begin{align}
\m{Q}_{\text{ML}}&(\boldsymbol{\gamma},\hat{\boldsymbol{\gamma}}^{(i)}
)  = \mathbb{E}\left[\log p(\bb{x}^{(\sf U)})\vert \textbf{y},
  \hat{\boldsymbol{\gamma}}^{(i)}\right]+ K_y \nonumber \\ \label{eq:EM2}
&  -\frac{1}{2}\mathbb{E}\left[(\textbf{y}-\textbf{M} \bb{x})^T
  {\boldsymbol \Sigma}_y^{-1}(\textbf{y}- \textbf{M} \bb{x}) \vert \textbf{y}, \hat{\boldsymbol{\gamma}}^{(i)}\right],\vspace{-5mm}
\end{align}
where $\textbf{M}$ is a (matrix) function of the
parameters $\bs{\gamma}$,  and $K_y = -N_C \log(2\pi) -0.5N_C \log
(0.5) + N_C \log(\rho^{2} )$. In addition, we define \vspace{-1mm}
\begin{align*}
  \bs{\m{M}}^T &= \begin{bmatrix}
    \bar{\bs{\m{M}}}_0^T &  \vdots & \bar{\bs{\m{M}}}_{N_C-1}^T
\end{bmatrix},\quad 
\bar{\bs{\m{M}}}_k = \begin{bmatrix}
\bb{a}_k & - \bb{b}_k\\
\bb{b}_k &  \bb{a}_k
\end{bmatrix}, \\
\bb{a}_k= & \,(\cos \psi_k ) \textbf{e}_{k+1}^T {\tilde{\textbf{X}}}_\text{\textscr}\m{I} - (\sin \psi_k) \textbf{e}_{k+1}^T{\tilde{\textbf{X}}}_\text{\textsci}\m{I},
\\
\bb{b}_k = & \,(\sin \psi_k) \textbf{e}_{k+1}^T{\tilde{\textbf{X}}}_\text{\textscr}\m{I} + (\cos \psi_k) \textbf{e}_{k+1}^T {\tilde{\textbf{X}}}_\text{\textsci}\m{I},
\end{align*}
with $\tilde{\textbf{X}}_{\text{\textscr}}$ and
$\tilde{\textbf{X}}_{\text{\textsci}}$ being the circulant matrices
generated by $\sf{P}\textbf{x}_{\text{\textscr}}$ and
$\sf{P}\textbf{x}_{\text{\textsci}}$, respectively, and $\m{I} = [
\textbf{I}_{L} \, \textbf{0}]^T$. Then, we can write ${\textbf M}
\bb{x} = \bs{\m{M}} \textbf{\textscg} \rho^{-1} $. Replacing this
equality in \eqref{eq:EM2}, and taking the derivative of
$\m{Q}_{\text{ML}}(\bs{\gamma},\hat{\bs{\gamma}}^{(i)})$ with respect
to $\textbf{\textscg}$, and $\rho$, we obtain:
\begin{align}
\label{eq:dQ_dh1}
\frac{\partial \m{Q}_{\text{ML}}}{\partial \textbf{\textscg}} & =
2\left( \rho \textbf{y}^T\mathbb{E}[\bs{\m{M}}\vert  \textbf{y},
\hat{\boldsymbol{\gamma}}^{(i)}] -   \mathbb{E}[ \bs{\m{M}}^T\bs{\m{M}} \vert
\textbf{y}, \hat{\boldsymbol{\gamma}}^{(i)} ] \textbf{\textscg} \right),\\
\frac{\partial \m{Q}_{\text{ML}}}{\partial \rho } & = \frac{2
  N_C}{\rho} -  2\left( \rho \textbf{y}^T \textbf{y} -  \textbf{y}^T
  \mathbb{E}\left[(\bs{\m{M}})\vert  \textbf{y},
    \hat{\boldsymbol{\gamma}}^{(i)}\right]
  \textbf{\textscg}\right). 
\label{eq:dQ_dsigma2}
\end{align}

\subsection{Evaluation of $\m{Q}_{\text{prior}}(\textbf{\textscg},\hat{\textbf{\textscg}}^{(i)} )$ and
  its derivative}
\label{subsec:Qprior}

We express $z(\cdot)$ as a variance-mean Gaussian mixture (VMGM)
\cite{ref:Polson}. When expressed in terms $\g_j,\, j = 1,2,...$, a
VMGM for the parameters is given by (see e.g \cite{ref:Polson,
  ref:Barndorff-Nielsen})\vspace{-1mm}
\begin{equation}
p(\textbf{\textscg}) = \prod_j \int_0^{\infty} p({\g_j}|{\lambda_j})p({\lambda_j})d{\lambda_j},\label{eq:mvgm} \vspace{-2mm}
\end{equation}
where $ {\g_j}|{\lambda_j} \sim \m{N}_{\g_j} (0 ,\lambda_j
\tau^2 )\,, \lambda_j \sim p(\lambda_j)$. In this sense, the random
variable $\lambda_j $ ($> 0$) can be considered as a \textit{hidden
  variable} in the EM algorithm. Hence, given \eqref{eq:mvgm}, the
auxiliary function
$\m{Q}_{\text{prior}}(\textbf{\g},\hat{\textbf{\g}}^{(i)})$, can be
expressed as\vspace{-2mm}
\begin{equation*}
  \hspace{-10mm} \m{Q}_{\text{prior}} (\textbf{\textscg},\hat{\textbf{\textscg}}^{(i)}) =
  \sum_{j = 1}^{2L}\int \log [
  p(\g_j,\lambda_j)] p(\lambda_j|\hat{\g}_j^{(i)})d(\lambda_j)
   \vspace{-3mm} 
\end{equation*}
\begin{equation}
  = \sum_{j = 1}^{2L}\int \left ( \log
  [p(\g_j|\lambda_j)]+\log
  [p(\lambda_j)] \right )p(\lambda_j|\hat{\g}_j^{(i)})d(\lambda_j),\vspace{-2mm}
  \label{eq:Q_prior1}
\end{equation}
since  $\log p(\textbf{\g},\bs{\lambda}) = \sum_{j = 1}^{2L} \log$  $ p(\g_j,\lambda_j)$. 
\begin{lemma}
In the case that \textup{$\m{Q}_{\text{prior}}(\textbf{\g},\hat{\textbf{\g}}^{(i)})$} is given by \eqref{eq:Q_prior1}, then its derivative is given by \textup{
\begin{align}
\! \frac{\partial
  \m{Q}_{\text{prior}}(\textbf{\g},\hat{\textbf{\g}}^{(i)})}{\partial
  \g_j}  
%& = \int \left ( -\frac{\g_j }{\tau^2 \lambda_j} \right )p(\lambda_j|\hat{\g}_j^{(i)})d(\lambda_j) \nonumber \\
  & = \left[  - \frac{\g_j }{\tau^2 }
    \mathbb{E}_{\lambda_j|\hat{\g}_j^{(i)}}\{\lambda_j^{-1}\} \right],
\label{eq:dQprior_dtheta}
\end{align}}
where \textup{$\mathbb{E}_{\lambda_j|\hat{\g}_j^{(i)}}\{\lambda_j^{-1} \}$} is the
expectation obtained from \vspace{-2mm}\textup{
\begin{equation}\label{eq:dlog_pen_eq}
- \frac{\hat{\g}^{(i)}_j }{\tau^2
   }\mathbb{E}_{\lambda_j|\hat{\g}^{(i)}_j}\{\lambda_j^{-1}\}  = \dot{z}\left(\frac{\hat{\g}^{(i)}_j}{\tau  }\right).
\end{equation}}
\end{lemma}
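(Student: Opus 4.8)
The plan is to differentiate the series in \eqref{eq:Q_prior1} term by term and to exploit the Gaussian form of the conditional prior in the VMGM hierarchy. First I would note that in $\m{Q}_{\text{prior}}(\textbf{\g},\hat{\textbf{\g}}^{(i)})$ the summation index runs over all $2L$ components, of which only the one matching the differentiation variable $\g_j$ contributes; moreover, within that summand the weight $p(\lambda_j|\hat{\g}_j^{(i)})$ depends only on the fixed iterate $\hat{\g}_j^{(i)}$ (not on the free variable $\g_j$) and the term $\log[p(\lambda_j)]$ does not involve $\g_j$ at all. Hence, interchanging differentiation with the integral, $\partial \m{Q}_{\text{prior}}/\partial \g_j = \int \big(\partial_{\g_j}\log p(\g_j|\lambda_j)\big)\,p(\lambda_j|\hat{\g}_j^{(i)})\,d\lambda_j$.

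Next I would use $\g_j|\lambda_j \sim \m{N}_{\g_j}(0,\lambda_j\tau^2)$, so that $\log p(\g_j|\lambda_j) = -\tfrac12\log(2\pi\lambda_j\tau^2) - \g_j^2/(2\lambda_j\tau^2)$ and therefore $\partial_{\g_j}\log p(\g_j|\lambda_j) = -\g_j/(\lambda_j\tau^2)$. Since the factor $-\g_j/\tau^2$ does not depend on $\lambda_j$, it pulls out of the integral, leaving $-\tfrac{\g_j}{\tau^2}\int \lambda_j^{-1}\,p(\lambda_j|\hat{\g}_j^{(i)})\,d\lambda_j = -\tfrac{\g_j}{\tau^2}\,\mathbb{E}_{\lambda_j|\hat{\g}_j^{(i)}}\{\lambda_j^{-1}\}$, which is exactly \eqref{eq:dQprior_dtheta}.

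It then remains to identify this conditional expectation with the derivative of the log-prior, i.e. to establish \eqref{eq:dlog_pen_eq}. For this I would differentiate the marginal mixture representation \eqref{eq:mvgm}, $p(\g_j) = \int_0^\infty p(\g_j|\lambda_j)p(\lambda_j)\,d\lambda_j$, again interchanging $\partial_{\g_j}$ with the integral, divide through by $p(\g_j)$, and recognize the resulting integrand as $\lambda_j^{-1}$ times the posterior $p(\lambda_j|\g_j) = p(\g_j|\lambda_j)p(\lambda_j)/p(\g_j)$ by Bayes' rule; this yields $\partial_{\g_j}\log p(\g_j) = -\tfrac{\g_j}{\tau^2}\,\mathbb{E}_{\lambda_j|\g_j}\{\lambda_j^{-1}\}$. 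Comparing with the characterization of the log-prior derivative implied by \eqref{eq:log_pen_sum}, namely $\partial_{\g_j}\log p(\g_j) = \dot z(\g_j/\tau)$, and evaluating at $\g_j = \hat{\g}_j^{(i)}$ gives \eqref{eq:dlog_pen_eq}, so that the expectation in \eqref{eq:dQprior_dtheta} is the one obtained from \eqref{eq:dlog_pen_eq}, as claimed.

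The Gaussian differentiation and the Bayes'-rule manipulation are routine. The main technical obstacle is justifying the interchange of differentiation and integration in both steps: one needs a local domination of $\partial_{\g_j}\big(p(\g_j|\lambda_j)p(\lambda_j)\big)$, uniform in $\g_j$ on a neighbourhood of the point of evaluation, by a $\lambda_j$-integrable function, which holds for the standard mixing densities that generate an $\ell_1$-type penalty such as the Laplace prior in \eqref{eq:p_theta_sigma}, and should be stated as the regularity hypothesis under which the lemma is valid. A secondary point is bookkeeping of the constant carried by the chain rule when passing between $z(\cdot/\tau)$ and $\dot z(\cdot/\tau)$, which must be reconciled with the normalization convention adopted in \eqref{eq:log_pen_sum}.
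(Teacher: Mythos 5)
Your proposal is correct, and it is essentially the only natural argument: differentiate \eqref{eq:Q_prior1} term by term (only the $j$th summand and only $\log p(\g_j|\lambda_j)$ depend on $\g_j$), use $\partial_{\g_j}\log p(\g_j|\lambda_j)=-\g_j/(\lambda_j\tau^2)$ for the conditional Gaussian, pull $-\g_j/\tau^2$ out of the integral to get \eqref{eq:dQprior_dtheta}, and then obtain the defining relation for the mixing expectation by differentiating the marginal \eqref{eq:mvgm} and applying Bayes' rule. Note, however, that the paper itself gives no in-text proof --- it simply cites \cite{ref:Carvajal12} --- so your write-up supplies the derivation that the paper outsources; it matches the standard VMGM/EM computation of Polson and Scott that the cited reference follows.

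One point you flag as ``bookkeeping'' is in fact a genuine discrepancy in the paper, and your instinct is right: with $\log p(\g_j)=z(\g_j/\tau)$ as in \eqref{eq:log_pen_sum}, the chain rule gives $\partial_{\g_j}\log p(\g_j)=\tau^{-1}\dot z(\g_j/\tau)$, so the identity should read $-\frac{\hat\g_j^{(i)}}{\tau^2}\mathbb{E}_{\lambda_j|\hat\g_j^{(i)}}\{\lambda_j^{-1}\}=\tau^{-1}\dot z\bigl(\hat\g_j^{(i)}/\tau\bigr)$, differing from \eqref{eq:dlog_pen_eq} as printed by a factor $\tau$. The paper's own Laplace-case evaluation later in Section \ref{subsec:Qprior} (which yields $\mathbb{E}_{\lambda_j|\hat\g_j}\{\lambda_j^{-1}\}=-\tau\,\mathrm{sign}(\hat\g_j^{(i)})/\hat\g_j^{(i)}$) is consistent with the chain-rule version, not with \eqref{eq:dlog_pen_eq} as displayed, so this is a typo in the lemma statement rather than an error in your argument; your derivation, carried through consistently, is the correct one. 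Your remark on dominated convergence to justify differentiating under the integral is a legitimate regularity caveat that neither the paper nor the cited reference makes explicit, and it holds for the exponential mixing density generating the Laplace prior used here.
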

\begin{proof}
See \cite{ref:Carvajal12}. \eot
\end{proof}

From the M-step, at the $i$th iteration, an estimate
($\hat{\g}_j^{(i)}$) of $\g_j$ is obtained. This estimate is,
then, inserted into \eqref{eq:dlog_pen_eq}, in order to obtain an
estimate of $\mathbb{E}_{\lambda_j| \hat{\g}^{(i)}_j}
\{\lambda_j^{-1} \}$, which in turn is utilized in the maximization of
$\m{Q}_\text{prior}$. Once the new estimate $\hat{\g}_j^{(i+1)}$
has been obtained, it is inserted into \eqref{eq:dlog_pen_eq} and the
iteration continues until convergence has been reached.

In our particular case, we only want to promote sparsity in
$\textbf{\g}$ (consequently in the CIR $\bb{h}$). Thus, our chosen
penalty function is $z(\g_j / \tau ) =
|\g_j/ \tau|$. Using \eqref{eq:dlog_pen_eq}, we have that
$\mathbb{E}_{\lambda_j|\hat{\g}_j}[\lambda_j^{-1}] =
-\tau \text{sign}(\hat{\g}^{(i)}_j)/
\hat{\g}^{(i)}_j$. Using this value for
$\mathbb{E}_{\lambda_j| \hat{\g}_j}\{\lambda_j^{-1}\}$,
and calculating ${\partial \m{Q}_{\text{prior}} /\partial 
\textbf{\g}} $ we have that %\vspace{-1mm}
\begin{align}
 \frac{\partial \m{Q}_{\text{prior}}}{\partial
   \textbf{\textscg} } = -\frac{1}{\tau^2}  
 \textbf{E} \textbf{\textscg},
 \label{eq:dQprior_dhbar} 
\vspace{-3mm}
\end{align} 
where $\textbf{E} = \text{diag}\left(
  \mathbb{E}_{\lambda_1|\hat{\g}_1^{(i)}}\{ \lambda_1^{-1} \},\,
  ... ,\,
  \mathbb{E}_{\lambda_{2L}|\hat{\g}_{2L}^{(i)}} \{
  \lambda_{2L}^{-1} \}  \right)$. \vspace{-2mm}

\subsection{Combination of   $\m{Q}_{\text{ML}}(\bs{\gamma},\hat{\bs{\gamma}}^{(i)})$ and  $\m{Q}_{\text{prior}}(\textbf{\g},\hat{\textbf{\g}}^{(i)})$}

We are building our strategy on an underlying ML estimation algorithm. Thus, we assume
$\m{Q}_{\text{ML}}$ and $\partial \m{Q}_\text{ML}/\partial \textbf{\g}$ known. The strategy is then to derive the augmented E-step considering
both $\m{Q}_\text{ML}$ and $\m{Q}_{\text{prior}}$ with respect to
$\textbf{\textscg}$, that is,
\begin{equation}
\frac{\partial \m{Q}}{\partial \textbf{\textscg} } = \frac{\partial
  \m{Q}_{\text{ML}}}{\partial \textbf{\textscg} }  + \frac{\partial
  \m{Q}_{\text{prior}}}{\partial \textbf{\textscg} }.
\label{eq:dQ_dh}
\end{equation}
Using \eqref{eq:dQ_dh1}, \eqref{eq:dQprior_dtheta},
\eqref{eq:dQprior_dhbar}, and \eqref{eq:dQ_dh}, and expressing
$\textbf{\textscg}$ as a function of $\varepsilon$, we have
\vspace{-1mm}
\begin{equation}\label{eq:h_bar}
\textbf{\textscg} =  \left[\mathbb{E}[\bs{\m{M}}^T \bs{\m{M}}\vert \textbf{y},
  \hat{\boldsymbol{\gamma}}^{(i)} ]
+ \frac{1}{2\tau^2} \textbf{E}  \right]^{-1} \! \! \mathbb{E}[\bs{\m{M}}\vert  \textbf{y},
\hat{\boldsymbol{\gamma}}^{(i)}]^T \rho \textbf{y}.
\vspace{-1mm}
\end{equation}
Replacing the expression for $\textbf{\textscg}$ in \eqref{eq:M_mod}, we can
optimize $\m{Q}$ in \eqref{eq:M_mod} with respect to the parameter
$\varepsilon$. Thus, the parameter $\textbf{\textscg}$ (consequently
$\bar{\textbf{h}}$) is obtained by replacing the result of the
optimization for $\varepsilon$ in \eqref{eq:h_bar}. One advantage of
our method is that it allows the concentration of the cost in one
variable, namely CFO. In addition, we obtain closed form expressions
for the optimization of the regularized communication channel, namely
CIR, which, in general, is not possible with other methods when
applying $\ell_1$-norm regularization.

\subsection{Estimation of $\tau$}
\label{subsec:tau_est}

So far, the proposed algorithm for sparse channel estimation relies upon knowledge of $\tau$ (or at least a good estimate of it). Knowledge of this variable is important for accurate
estimates of $\bb{h}$. However, having \emph{a priori} knowledge of this
parameter is not always possible. For example, in an urban cellular
network, the channel can exhibit different behaviours depending on the
location, presenting the possibility of having different values of
$\tau$ (at each one of the locations). Thus, we seek an estimate of $\tau$.
\begin{lemma}
Using \eqref{eq:p_theta_sigma}, the Empirical-Bayes (EB) estimate \textup{$\hat{\tau}_{\text{EB}}$} is given by \vspace{-3mm}\textup{
\begin{equation}\label{eq:tau_est}
\hat{\tau}_{\text{EB}} = \frac{ \mathbb{E}_{|\bb{h}|/\sigma  | \bb{y},
    \hat{\tau}^{(i)} }\left[ \frac{|\bb{h}| }{\sigma}\right]}  {2L}.
    \vspace{-1mm}
\end{equation}}
\end{lemma}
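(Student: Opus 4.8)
The plan is to obtain $\hat{\tau}_{\text{EB}}$ from the stationarity condition of the EM surrogate in which $\tau$ is the quantity being updated and $(\bb{h},\sigma)$ are the latent variables averaged out under the posterior evaluated at the previous iterate $\hat{\tau}^{(i)}$ (and the current estimates of the remaining parameters). Since $\tau$ enters the joint model of \eqref{eq:log_map1} only through the conditional prior $p(\bb{h}\,\vert\,\sigma)$ of \eqref{eq:p_theta_sigma} --- the likelihood $p(\textbf{y}\,\vert\,\bs{\theta})$ and the non-informative marginals $p(\sigma)=p(\varepsilon)=1$ being $\tau$-free --- the relevant surrogate is
\[
\m{Q}_\tau(\tau,\hat{\tau}^{(i)}) := \mathbb{E}\left[\log p(\bb{h}\,\vert\,\sigma)\,\vert\,\textbf{y},\hat{\tau}^{(i)}\right].
\]

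First I would substitute \eqref{eq:p_theta_sigma} and expand the logarithm, writing $|\bb{h}| = \sum_{j=1}^{2L}|h_j|$ for the $\ell_1$ norm:
\[
\log p(\bb{h}\,\vert\,\sigma) = -2L\log 2 - 2L\log\sigma - 2L\log\tau - \frac{1}{\tau}\,\frac{|\bb{h}|}{\sigma}.
\]
Taking the conditional expectation given $\textbf{y},\hat{\tau}^{(i)}$ leaves the first two terms as $\tau$-independent constants and turns the last one into $-\tau^{-1}\,\mathbb{E}_{|\bb{h}|/\sigma\,\vert\,\textbf{y},\hat{\tau}^{(i)}}[\,|\bb{h}|/\sigma\,]$, so that $\m{Q}_\tau$ depends on $\tau$ only through $-2L\log\tau - \tau^{-1}\,\mathbb{E}_{|\bb{h}|/\sigma\,\vert\,\textbf{y},\hat{\tau}^{(i)}}[\,|\bb{h}|/\sigma\,]$.

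Next I would impose the first-order optimality condition. Differentiating in $\tau$,
\[
\frac{\partial \m{Q}_\tau}{\partial \tau} = -\frac{2L}{\tau} + \frac{1}{\tau^{2}}\,\mathbb{E}_{|\bb{h}|/\sigma\,\vert\,\textbf{y},\hat{\tau}^{(i)}}\!\left[\frac{|\bb{h}|}{\sigma}\right],
\]
and setting this to zero and solving for $\tau>0$ gives exactly \eqref{eq:tau_est}. To confirm this stationary point is the maximizer I would observe that, in the variable $u=1/\tau$, $\m{Q}_\tau$ equals a positive multiple of $\log u$ minus a linear term, hence is strictly concave on $(0,\infty)$; moreover $\m{Q}_\tau\to-\infty$ as $\tau\to 0^{+}$ and as $\tau\to\infty$, and the second derivative at the stationary point equals $-2L/\tau^{2}<0$.

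The only delicate point is bookkeeping rather than analysis: one must be explicit that in this Empirical-Bayes step the roles of parameter and latent variable are reversed relative to the main EM recursion for $\bs{\gamma}$ --- $\tau$ is updated while $(\bb{h},\sigma)$ are integrated out --- and that the quantity in \eqref{eq:tau_est} is the posterior mean of the single scalar $|\bb{h}|/\sigma=\|\textbf{\g}\|_1$, which is already available from the variance-mean Gaussian mixture posterior of $\textbf{\g}$ computed in the E-step. Everything else is the elementary one-dimensional optimization above.
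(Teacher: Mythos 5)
Your proposal is correct and follows essentially the same route as the paper: you maximize the EM/Empirical--Bayes surrogate given by the posterior expectation of $\log p(\bb{h}\,\vert\,\sigma)$ from \eqref{eq:p_theta_sigma}, which is exactly the paper's auxiliary function $2L\log(\tau^{-1}/2\sigma)-\tau^{-1}\mathbb{E}[\,|\bb{h}|/\sigma\,]$ up to $\tau$-independent constants, and the first-order condition yields \eqref{eq:tau_est}. The only differences are cosmetic --- you differentiate in $\tau$ rather than $\tau^{-1}$ and add a concavity check that the paper omits.
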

\begin{proof}
  We define an auxiliary function $Q(\tau,\tau^{(i)}) =
  2L\log(\tau^{-1}/2\sigma) - \tau^{-1} \mathbb{E}_{|\bb{h}|/\sigma |
    \bb{y}, \hat{\tau}^{(i)} }\left[ \frac{|\bb{h}| }{\sigma}\right]$,
  take derivative with respect to $\tau^{-1}$, and then set the result equal to zero. \eot
\end{proof}
In general, the computation of $\mathbb{E}_{|\bb{h}|/\sigma | \bb{y},
  \hat{\tau}^{(i)} }\left[ \frac{|\bb{h}| }{\sigma}\right]$ is
computational expensive, requiring, in addition, many observations
$\bb{y}$. To avoid this problem, we approximate
$\mathbb{E}_{|\bb{h}|/\sigma | \bb{y}, \hat{\tau}^{(i)} }\left[
  \frac{|\bb{h}| }{\sigma}\right]\approx
| \hat{\bb{h}}_{\text{ML}}| /\hat{\sigma}_{\text{ML}}$, where
$\hat{\bb{h}}_{\text{ML}}$ and $\hat{\sigma}_{\text{ML}}$ are the ML
estimates using no regularization term, and where $\bar{\textbf{x}}$ is 
completely known (100\% training).

The solution to the regularized estimation problem (considering the
reparametrization) presented in this work can be summarized in the
following steps:
\begin{itemize}
\item[(i)] with 100\% training, and no regularization term, calculate
 $\hat{\tau} \approx \hat{\bb{h}}_{\text{ML}}/2L\hat{\sigma}_{\text{ML}}$,
\item[(ii)] $\hat{{\bs{\theta}}}^{(i)} = ({\hat{\bb{h}}}^{(i)} ,
  \hat{\varepsilon}^{(i)}, ({\hat{\sigma}^2})^{(i)} )$, and form the new
  variables: $\textbf{\textscg}^{(i)} = \hat{\bb{h}}^{(i)}/\hat{\sigma}^{(i)}$,
  and $\rho^{(i)} = 1/\hat{\sigma}^{(i)}$,
\item[(iii)] with a fixed $(\hat{\sigma}^2)^{(i)}$ from (ii), optimize
  for $\varepsilon$ after replacing \eqref{eq:h_bar} in \eqref{eq:M_mod},
\item[(iv)] with the estimate ($\hat{\textbf{\textscg}}^{(i+1)},
  \hat{\varepsilon}^{(i+1)}$) (consequently $\hat{\bb{h}}^{(i+1)}$) obtained
  in (iii), find $\hat{\sigma}^{(i+1)}$ from making zero the
  right-hand side of \eqref{eq:dQ_dsigma2}, and solve a quadratic
  equation, 
\item[(v)] go back to (ii) until convergence.
\end{itemize}

\section{Numerical Example}

\begin{table}[t]
%\vspace{-3mm}
\caption{N-MSE performance for MAP and ML estimation. Unknown channel noise variance.}   % title of Table
\vspace{5mm}
\setlength{\tabcolsep}{2.7pt}
\centering                          % used for centering table
\footnotesize{
\begin{tabular}{c c c c}             % centered columns (4 columns)
\toprule
Approach & Training & Number of iterations & MSE   \\ [0.5ex]   % inserts table heading
\cmidrule(r){1-4}                         % inserts single horizontal line
%\hline 
\multicolumn{1}{c}{\multirow{2}{*}{ML (SNR = 5[db])}}  & $100\%$ & $100$ & $2.95\times 10^{-2}$\\
&  $62.5\%$ & $300$ & $9.24\times 10^{-2}$\\ 
\multicolumn{1}{c}{\multirow{2}{*}{MAP (SNR = 5[db])}}  & $100\%$ & $100$ & $2.10\times 10^{-2}$\\
&  $62.5\%$ & $300$ & $4.69\times 10^{-2}$\\ \hline
\multicolumn{1}{c}{\multirow{2}{*}{ML (SNR = 10[db])}}  & $100\%$ & $100$ & $9.30\times 10^{-3}$\\
&  $62.5\%$ & $300$ & $2.67\times 10^{-2}$\\ 
\multicolumn{1}{c}{\multirow{2}{*}{MAP (SNR = 10[db])}}  & $100\%$ & $100$ & $5.80\times 10^{-3}$\\
&  $62.5\%$ & $300$ & $1.27\times 10^{-2}$\\
\bottomrule                             %inserts single line
%\hline
%\hline
\end{tabular}}
\label{table:variance_u}          % is used to refer this table in the text
%\vspace{-5mm}
\end{table}

\begin{figure}[t]
%\vspace{-8mm}
	\hspace{-2mm}
%	\begin{flushleft}
\begin{center}
		\includegraphics[width=0.6\columnwidth]{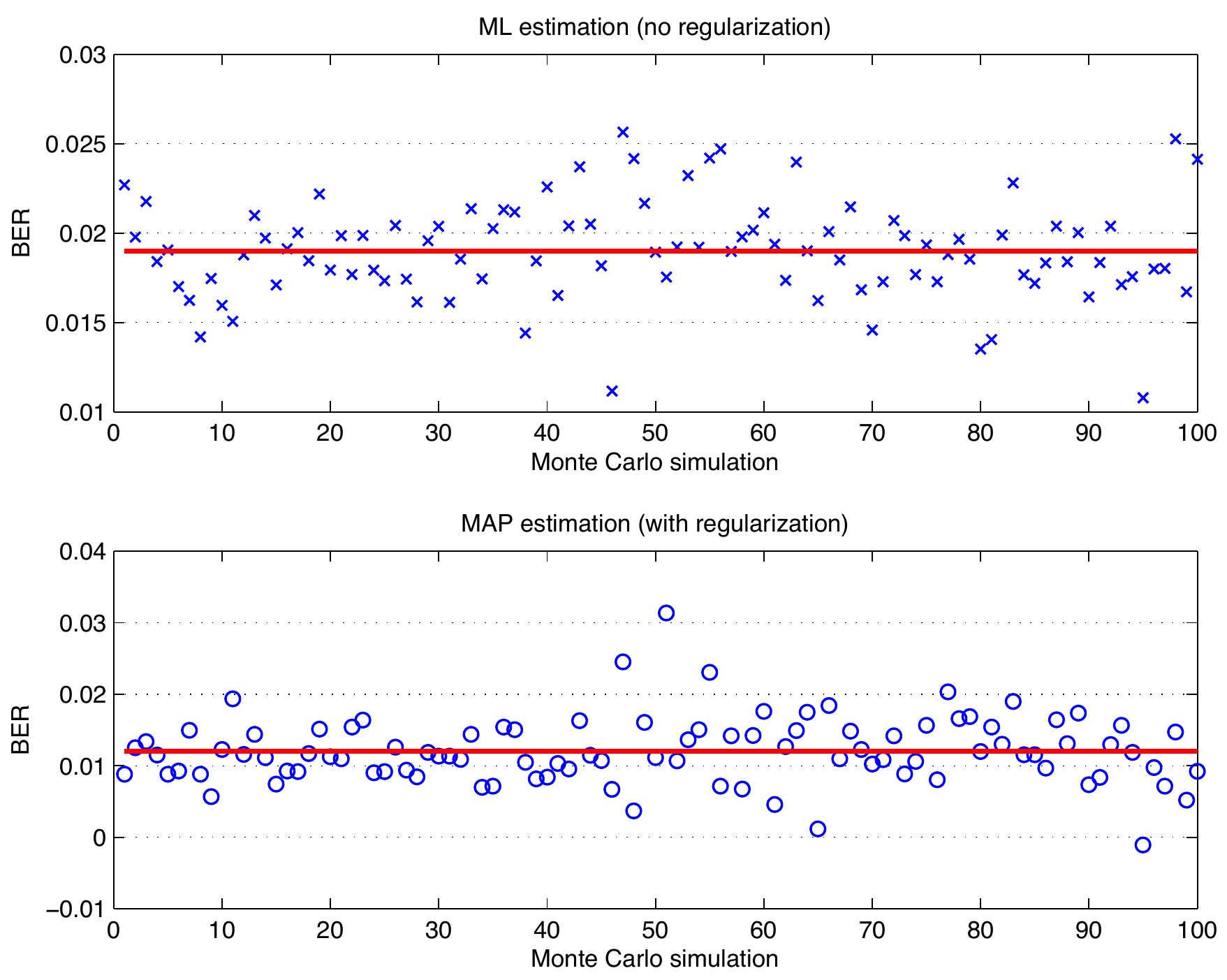}
%	\end{flushleft}
\end{center}
	\vspace{-5mm}
        \caption{ ML estimation (upper plot),
          MAP estimation (lower plot). Continuous
          line: average value. $\text{SNR} = 10[\text{dB}]$.}
%	\vspace{-6mm}
    \label{fig:ber}
\end{figure}

In this section, we present a numerical example using our approach for
an OFDM system with CFO. We assume that the unknown part of the time-domain transmitted signal is approximately Gaussian distributed (a consequence of the Central Limit Theorem). Thus, $p(\bb{x}^{(\sf{U})}) \sim \m{N} \left( \textbf{0}, \bs{\Sigma}_{\bar{x}^{({\sf U})}} \right)$, where $\bs{\Sigma}_{\bar{x}^{({\sf U})}} = \begin{bmatrix}
\bs{\Sigma}_{\textbf{x}_{\text{\textscr}}^{({\sf U})}} &
\bs{\Sigma}_{\textbf{x}_{\text{\textscr}}^{({\sf U})}
  \textbf{x}_{\text{\textsci}}^{({\sf U})}}\\
\bs{\Sigma}_{\textbf{x}_{\text{\textsci}}^{({\sf U})}
  \textbf{x}_{\text{\textscr}}^{({\sf U})}} &
\bs{\Sigma}_{\textbf{x}_{\text{\textsci}}^{({\sf U})}}
\end{bmatrix}$,
$\bs{\Sigma}_{\textbf{x}_{\text{\textscr}}^{({\sf U})}} = E[\textbf{x}_{\text{\textscr}}^{({\sf U})} {\textbf{x}_{\text{\textscr}}^{({\sf U})}}^T] $, $\bs{\Sigma}_{\textbf{x}_{\text{\textsci}}^{({\sf U})}} = E[\textbf{x}_{\text{\textsci}}^{({\sf U})} {\textbf{x}_{\text{\textsci}}^{({\sf U})}}^T]$, and $\bs{\Sigma}_{\textbf{x}_{\text{\textsci}}^{({\sf U})}
  \textbf{x}_{\text{\textscr}}^{({\sf U})}} = E[ \textbf{x}_{\text{\textsci}}^{({\sf U})} {\textbf{x}_{\text{\textscr}}^{({\sf U})}}^T ]$ (known).

The expectations on the right hand side of \eqref{eq:dQ_dh1} and
\eqref{eq:dQ_dsigma2} can be readily calculated by applying Kalman
filtering to the model in \eqref{eq:sss_ext}.  In addition, we consider
that channel noise variance is unknown. We consider the following
set-up: (i) $N_C = 64$, (ii) a sparse channel impulse response of
length $20$, with $14$ taps equal to zero, (iii) the transmitted
signal is Gaussian distributed, (iv) the signal to noise ratio is
$5[\text{dB}]$ and $10[\text{dB}]$, (v) $\varepsilon = 0.2537$, and
(vi) $62.5\%$ of training. As a performance measure, we consider the
normalized mean square error, defined as $\text{NMSE}:=
(\textbf{h}-\hat{\textbf{h}})^H (\textbf{h}-\hat{\textbf{h}}
)/(\textbf{h}^H \textbf{h})$. Using 100 different realizations for the
noise, the results can be seen in Table \ref{table:variance_u}. We can
conclude that regularization only helps if limited amount of data is
available. For the case of $\SNR = 10[\text{dB}]$, in Fig \ref{fig:ber}, the
average BER for ML estimation is 0.0195, and for MAP estimation is
0.0132.

% This is
% obtained for ML estimation, given by $\text{NMSE}_\text{ML} = XXXX$,
% and for the proposed MAP estimation, using $\hat{\tau} $, denoted by
% $\text{NMSE}_\text{MAP}$. In Fig. \ref{fig:Nc_64_est}, we show the
% relative difference between $\text{NMSE}_\text{ML}$ and
% $\text{NMSE}_\text{MAP}$, computed as $\text{NMSE}_\text{diff} = 100
% \times (\text{NMSE}_\text{ML} -
% \text{NMSE}_\text{MAP})/\text{NMSE}_\text{ML}$.  XXXX
% (FALTA). Therefore, we can conclude that regularization helps when the
% amount of data is limited, as the case where a $62.5\%$ training is
% considered.

% \begin{table}[t]
% %\vspace{-3mm}
% \caption{N-MSE performance for MAP and ML estimation. Known channel noise variance.}   % title of Table
% \vspace{2mm}
% %\setlength{\tabcolsep}{2.7pt}
% \centering                          % used for centering table
% \begin{tabular}{c c c c}             % centered columns (4 columns)
% \toprule
% Approach & Training & Number of iterations & MSE   \\ [0.5ex]   % inserts table heading
% \cmidrule(r){1-4}                         % inserts single horizontal line
% %\hline 
% \multicolumn{1}{c}{\multirow{2}{*}{ML}}  & $100\%$ & $50$ & $2.92\times 10^{-4}$\\
% &  $62.5\%$ & $200$ & $2.8\times 10^{-3}$\\ \hline
% \multicolumn{1}{c}{\multirow{2}{*}{MAP}}  & $100\%$ & $50$ & $3.16\times 10^{-4}$\\
% &  $62.5\%$ & $200$ & $2.9\times 10^{-3}$\\
% \bottomrule                             %inserts single line
% %\hline
% %\hline
% \end{tabular}
% \label{table:variance_k}          % is used to refer this table in the text
% \end{table}

\section{Conclusions}

In this work, we have proposed an algorithm to estimate sparse channels in OFDM systems, the CFO, the variance of the
noise, the symbol, and the parameter defining the a priori
distribution of the sparse channel. This is achieved in
the framework of MAP estimation, using the EM algorithm.

Sparsity has been promoted by using an $\ell_1$-norm regularization,
in the form of a prior distribution for the CIR. For that, the EM
algorithm has been modified to include this case. In
addition, we have concentrated the cost function in the M-step to numerically optimize one single variable ($\varepsilon$).

The numerical examples illustrate the effectiveness of this approach
for the partial training case, obtaining, in most cases studied, a lower
value for NMSE using regularization compared to the value for NMSE
using no regularization. For the full training case, there is no
noticeable difference between the estimates obtained with ML and MAP. This confirms that prior knowledge is useful when the
amount of data is limited.

\bibliographystyle{IEEEtran}

% ------------------------------------------------------------------------------------- %

\end{document}